\title{A Tractability Gap Beyond Nim-Sums: It's Hard to Tell Whether a Bunch of Superstars Are Losers}
\author{Kyle Burke \\ Florida Southern College \and Matthew Ferland \\ USC \and Svenja Huntemann \\ Mount Saint Vincent University \and Shang-Hua Teng\thanks{Supported in part by NSF Grant CCF-2308744 and the
Simons Investigator Award from the Simons Foundation. }\\ USC}
\date{May 2023}
\theoremstyle{definition}
\newtheorem{definition}{Definition}[section]
\theoremstyle{plain}
\newtheorem{theorem}[definition]{Theorem}
\newtheorem{corollary}[definition]{Corollary}
\newtheorem{lemma}[definition]{Lemma}
\newtheorem{conjecture}[definition]{Conjecture}
\newtheorem{proposition}[definition]{Proposition}
\newcommand{\kyle}[1]{{\color{blue} [Kyle: #1]}}
\newcommand{\oneonset}{\ensuremath{S^\text{OneOn}}}
\newcommand{\alloffset}{\ensuremath{S^\text{AllOff}}}
\begin{document}

\maketitle


\begin{abstract}
In this paper, we address a natural question at the intersection of combinatorial game theory and computational complexity: ``Can a sum of simple \emph{tepid games} in canonical form be intractable?'' To resolve this fundamental question, we consider \textit{superstars}, positions first introduced in \textit{Winning Ways} where all options are \textit{nimbers}. 
Extending Morris' classic result with hot games to tepid games, 
we prove that disjunctive sums of superstars are 
intractable to solve.  
This is striking as sums of nimbers can be computed in linear time.  
Our analyses also lead to a family of elegant board games with intriguing complexity, for which we present web-playable versions of the rulesets described within.
\end{abstract}

\section{Introduction}
``The whole is greater than the sum of their parts'' is an ancient phrase that particularly exemplifies combinatorial game theory. As an area of mathematics dedicated to analyzing what happens when several games are combined, the field is rich with results both in isolation and with interdisciplinary connections. Indeed, even casually, games are often combined for enjoyment, such as Bughouse (2 simultaneous games of Chess) and Ultimate Tic Tac Toe (9 simultaneous games of Tic Tac Toe).

While several different ways to combine games are studied, the predominant one is the \textit{disjunctive sum of games} following the \textit{normal play} convention. In a disjunctive sum, players alternate turns choosing a single game component, making a move on it, and passing the turn over to their opponent, leaving the other components unchanged.  Under normal play, when a player is unable to make a move (because there are no moves remaining for them in any game), then that player loses.  In other words, the last player to make a move wins.  

The modern forms of both combinatorial game theory and computational complexity theory were born around half a century ago, and there are several results of the latter about the former.  Most relevant to this work, in 1981, Morris demonstrated that a sum of (individually polynomial-time solvable games) is PSPACE-complete \cite{morris1981playing}.  The hard game sums in that reduction have several key requirements.  First, they involve deeply asymmetric games (i.e., games where the moves available to the two players are very different). Second, the games have exponential length (the number of turns). Third, a polynomial number of games are included in the sums.  Finally, most components of the game are \textit{hot games}, meaning that players are incentivized to play first on most games in the sum.

Later results by Yedwab \cite{yedwab1985playing}, Mowes \cite{moews1993some}, and eventually Wolfe \cite{wolfe2002go} improved the reduction by eliminating the exponential length and reducing the branching factor size (to a smaller constant). However, the other two limitations remain.

A more recent result in 2021 \cite{BurkeFerlandTengGrundy} demonstrated that the sum of two tractable symmetric polynomial-length impartial games\footnote{positions in which both players have the same options}, when combined, are PSPACE-complete. This was accomplished using a pair of natural
games known as \ruleset{Undirected Geography}.
Although \ruleset{Undirected Geography} positions can be solved in polynomial time \cite{DBLP:journals/tcs/FraenkelSU93},
 it is \cclass{PSPACE}-complete to determine their values, as shown in \cite{BurkeFerlandTengGrundy}.





Therefore, the hardness comes from finding that value rather than describing the difficulty of performing the mathematical operation. In fact, these impartial games have a simple polynomial algorithm to identify a winner in a sum if the game is already in its simplest form.

This paper continues the chain of results that Morris started, finding intractible summands with even more shallow game trees than were previously known.  First, instead of hot games, we sum components from a family of \textit{tepid} games.  Tepid is a term based in temperature theory, where the temperature of a game is the number that approximates the incentive to move in a position.  Hotter games have a higher value of potentially-earned moves in the favor of the first player to play on them.  This can be by supplying moves to use later or denying your opponent later free plays.\footnote{For a more thorough description of temperature theory in CGT, we recommend \cite{SiegelCGT:2013}.}  Cold games use up these moves when a player plays on them; their temperature is negative.  Tepid games all have a temperature of zero: playing on them doesn't earn either player any moves but can influence the parity of the current situation.  

The second reason our work continues the chain of intractible sums is that, instead of the more deeply asymmetric games that Morris used, we use a family of \textit{nearly symmetric} games,  which become symmetric after a single move. Third, unlike the result with \ruleset{Undirected Geography}, we use a family of games that are already in \textit{canonical form}, which is to say, directly in the form of their values. And finally, this family of games is deeply related to one whose existence traces back to the birth of the modern theory.

The main family of games we consider are called \textit{superstars}\footnote{There are some inconsistencies with this choice of name, which we discuss in full in  \cref{sec:superstarTerminology}.}.
These values naturally occur in the game \ruleset{Paint Can}\cite{SILVA2023113665}, which we discuss in \cref{sec:background}.  We show that for sums of superstars, it is computationally-intractable to determine which player has a winning strategy. 


\begin{theorem}
A sum of superstars is \cclass{NP}-hard. 
\end{theorem}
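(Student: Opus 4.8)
The plan is to reduce the outcome question to a purely combinatorial condition on the nimber options, and then to embed an NP-complete problem into that condition. First I would invoke the theory of all-small games: every superstar $G=\{*a_1,\dots,*a_r \mid *b_1,\dots,*b_s\}$ has all options nimbers, hence atomic weight $0$, so any disjunctive sum $G_1+\cdots+G_k$ again has atomic weight $0$. This places us squarely in the delicate regime where atomic weight does not decide the outcome, and the winner is instead governed by how the sum compares with $*$ and with remote stars $*N$. Following the superstar calculus of \emph{Winning Ways}, I would distill this comparison, by induction on the number of unopened components, into a static criterion: the sum is a $\mathcal{P}$-position exactly when (i) a nim-like parity/XOR quantity formed from the ``base'' options vanishes, and (ii) a covering condition holds on the ``sporadic'' large-index nimber options contributed by the various $G_i$. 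Part (i) is computable in linear time (it is just a Nim-sum), so all of the difficulty is forced into the covering condition (ii).

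With the criterion in hand, I would reduce from a covering problem such as \textsc{Set Cover} (or \textsc{Exact Cover by $3$-Sets}). Given a ground set and a family of subsets, I would build one superstar per subset whose sporadic options are exactly the nimbers indexing that subset's elements, together with a small number of \emph{control} superstars carrying large, pairwise distinct nimber values. The control components serve two purposes: they pin down the parity needed for part (i) of the criterion, and, because their values are distinct and larger than anything used in the gadgets, they cannot cancel against the gadget values under Nim-addition, which keeps the gadgets \emph{isolated} so that the only cross-component interaction is the intended one. All nimber values can be taken polynomially bounded, so each superstar and the whole instance have polynomial size.

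It then remains to verify both directions against the criterion: arranging the control components so that part (i) always holds, a cover of the ground set corresponds exactly to satisfying the covering condition (ii), whence a cover exists if and only if the sum is a $\mathcal{P}$-position (a loss for the player to move), and the whole construction is computable in polynomial time. The main obstacle is Step~1, establishing the winning criterion rigorously: the atomic weight of each component is $0$, so the outcome lives entirely in the remote-star ``tie-breaking'' layer of the theory, and I would need to track the comparisons $G_1+\cdots+G_k \gtrless *N$ carefully enough to extract a clean combinatorial invariant. A secondary obstacle is gadget isolation---ensuring that no unintended Nim-cancellation among the sporadic and control values creates spurious $\mathcal{P}$-positions---which is what forces the use of large, well-separated anchor values and a careful parity accounting.
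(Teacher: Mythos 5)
Your proposal founders at its very first step, and on a factual point: it is not true that every superstar has atomic weight $0$. All options being nimbers makes a superstar \emph{all-small}, but all-small games can have any integer atomic weight; for example $\uparrow = \{0 \mid *\}$ is itself a superstar (a left-0 game in the paper's taxonomy) and has atomic weight $1$. More generally, left-0 superstars have atomic weight $+1$ and right-0 superstars $-1$, which is implicit in the paper's comet correspondence ($S$ left-0 gives the atomic-weight-zero comet $\downarrow + * + S$). This is not a cosmetic slip: the nonzero, asymmetric atomic weights are precisely the engine of the paper's argument. Its key lemma (the ``0 game win'' lemma, provable by the two-ahead rule) says that a surplus of left-0 games at Left's turn is winning for Left, and the reduction uses this to \emph{force} the players to alternate moves on the opponent's 0-games, so that the sum of superstars faithfully simulates an alternating assignment game. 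Your framing, in which atomic weight is uniformly zero and the outcome lives entirely in a remote-star tie-breaking layer, discards exactly the structure the hardness proof runs on.

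The second, and larger, gap is that your Step 1 --- a static characterization ``$\mathcal{P}$-position iff (i) a nim-sum vanishes and (ii) a covering condition holds'' --- is asserted rather than proved, and there is no reason to expect such a one-sided existential criterion to exist: in a sum of superstars \emph{both} players choose among nimber options, so the outcome is governed by an alternating (quantified) condition, not by the existence of a cover. Indeed, if condition (ii) were an NP-style covering condition, the outcome problem would lie in \cclass{NP}, whereas the paper conjectures it is \cclass{PSPACE}-complete; at minimum, your criterion is the entire difficulty, and your proposal itself concedes it has no proof. The paper avoids needing any static criterion by routing through an explicitly alternating intermediate problem (Equal-Partitioned Multistate XOR-SAT): variables of the two players become right-0 and left-0 superstars whose nimber options encode clause memberships as subset sums of powers of two against an initial heap $*(2^m-1)$, deviation onto the heap is punished by large escape options $*2^{m+j}$, and \cclass{NP}-hardness (rather than \cclass{PSPACE}) is obtained precisely because the 3SAT-to-EPMX reduction makes the opposing player a dummy whose moves are irrelevant. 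Your construction has no analogous mechanism for neutralizing the opponent's choices or for preventing them from disrupting the intended nim-sum, and without one the Set Cover reduction cannot be verified in either direction.
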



The paper is structured as follows: In \cref{sec:background} we introduce the necessary concepts from combinatorial game theory. The proof of the main theorem is given in \cref{sec:reduction}. The reduction used to prove our main theorem also leads to a nice new ruleset which we call \ruleset{Blackout}, introduced in \cref{sec:lightswitching}.

\section{Superstars: Theory and \sc{Paint Can}}\label{sec:background}

\subsection{Rising Stars: From Stars to Superstars}

In this section we will give a brief introduction to concepts from combinatorial game theory (CGT) required for this paper. For more information and a rigourous treatment of these topics, see \cite{WinningWays:2001}, \cite{LessonsInPlay:2007}, and \cite{SiegelCGT:2013}. 

In the game {\sc Nim}, played on piles of tokens, the two players take turns choosing a pile and removing any nonzero number of tokens. Under \textit{normal play} the player to pick the last token(s) wins.

\textsc{Nim} is an \emph{impartial game}, meaning one in which the two players have the same possible moves. An impartial game is denoted $G=\{G_1,\ldots,G_n\}$, where $G_1,\ldots, G_n$ are the options the players can move to.

When only a single pile remains in \textsc{Nim}, the current player will simply remove all tokens. But when several piles remain, the optimal move is often to only take some of the tokens. Thus all possible moves on a pile need to be considered when in a sum. To do so, we assign a \textit{value} to each pile which represents the possible moves. An empty pile, thus one in which there are no moves, is given the value $0$. The value of a pile with $n$ tokens is the \textit{nimber} $*n$. For a consistent recursive definition, we think of $0$ as the nimber $*0$. The shorthand $*1=*$ is also generally used.

\begin{definition}
The nimber $*n$ is recursively defined by its options as
\begin{itemize}
    \item $*0=0=\emptyset$ (no available moves);
    \item $*1=*=\{0\}$
    \item $*n=\{0,*,*2,\ldots,*(n-1)\}$
\end{itemize}
\end{definition}

The \emph{(disjunctive) sum} $G_1+\cdots+G_n$ of games $G_1, \ldots, G_n$ is the game in which the players chose a summand (or component), then make a move in it. A Nim position is naturally the disjunctive sum of its separate piles. Many other games also naturally break into components, but we can consider sums of any games in general.

We say that two games are \emph{equal} to each other when one can be replaced with the other in \emph{any} disjunctive sum without changing the winnability. I.e., $A=B$ whenever who wins in $A + X$ is the same as in $B + X$ for any game $X$.

A sum of nimbers is always equal to a single nimber. Finding which nimber it is requires only an XOR sum (also known as \textit{nim sum} in CGT) and therefore is in \cclass{P} (solveable in polynomial time). 

In {\sc Nim}, both players have the same available moves. For a game where the two players, which we call \textit{Left} and \textit{Right}, have differing moves, we use the notation
\[\{\text{Left's options}\mid\text{Right's options\}}.\]
Such games are called \textit{partizan games}, and they have four \emph{outcome (winnability) classes}. A game in
\begin{itemize}
    \item $\mathcal{N}$ is won by the player that moves first, no matter whether they are Left or Right;
    \item $\mathcal{P}$ is won by the player that moves second, no matter whether they are Left or Right;
    \item $\mathcal{L}$ is won by Left no matter who goes first; and
    \item $\mathcal{R}$ is won by Right no matter who goes first.
\end{itemize}

\begin{definition}
A \textit{superstar} is a game in which all options for Left and Right are nimbers, possibly not all the same.
\end{definition}

A superstar in which the options for both players are the same is a nimber. Even more in general we have the following:

\begin{proposition}[\cite{ONAG:2000}]\label{thm:superstarSimplified}

The superstar
\[\{0,*,\ldots,*(n-1), *x_1,\ldots,*x_k\mid 0,*,\ldots,*(n-1), *y_1,\ldots,*y_l\},\]
where $x_i,y_j>n$ for all $i$ and $j$, is equal to the nimber $*n$.
\end{proposition}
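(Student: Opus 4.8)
The plan is to reduce the claimed equality to a single outcome-class computation. Write $G$ for the superstar in the statement. By the paper's definition of equality, together with the standard fact that every nimber is its own negative (so $-{*n} = {*n}$), we have $G = {*n}$ if and only if the sum $G + {*n}$ is a second-player win, i.e.\ $G + {*n} \in \mathcal{P}$. Thus the entire proof reduces to exhibiting a winning strategy for the player who moves \emph{second} in $G + {*n}$, and it is essential that a single strategy works whether Left or Right is the one who moves first.

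I would build this strategy as a pairing argument: after any opening move, the second player responds so as to return the position to value $0$. There are three types of opening move to answer. If the first player moves inside the ${*n}$ component to some $*m$ with $m < n$, the second player replies inside $G$ to the nimber $*m$; this is legal because $0, *, \ldots, *(n-1)$ appear among \emph{both} Left's and Right's options of $G$, so the reply is available to whichever player is responding, and the position becomes $*m + *m = 0$. Symmetrically, if the first player moves inside $G$ to one of the shared low options $*m$ with $m < n$, the second player mirrors in the ${*n}$ component, again reaching $*m + *m = 0$. Finally, if the first player moves inside $G$ to one of the partizan ``high'' options $*x_i$ or $*y_j$, whose value $*p$ satisfies $p > n$, the second player does \emph{not} mirror into ${*n}$ but instead moves within that very nimber component, taking $*p \to {*n}$ (legal exactly because $n < p$), reaching ${*n} + {*n} = 0$. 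In every case the second player lands on a position of value $0$, a $\mathcal{P}$-position, so $G + {*n} \in \mathcal{P}$ and the proposition follows.

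The step I expect to require the most care is the third case, together with confirming that one strategy serves both players. The asymmetry of the superstar --- the sets $\{x_i\}$ and $\{y_j\}$ may differ and each is accessible to only one player --- could at first appear to obstruct a uniform mirroring strategy. The resolution is the observation that a high option never needs to be paired against ${*n}$; it is instead neutralized \emph{inside its own nimber component}, which is impartial and hence equally usable by whichever player must respond. The complementary point, which I would state explicitly rather than leave implicit, is that the low options $0, *, \ldots, *(n-1)$ are genuinely shared by Left and Right in $G$, so the first two cases are answerable no matter who moves first. Once these two facts are in hand, the case analysis closes cleanly and no temperature- or value-theoretic machinery beyond $-{*n} = {*n}$ is needed.
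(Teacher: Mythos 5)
Your proof is correct. There is nothing in the paper to compare it against: the proposition is imported from \emph{On Numbers and Games} \cite{ONAG:2000} without proof, and your pairing argument is exactly the standard one for this fact. The reduction of the equality $G = *n$ to the single outcome computation $G + *n \in \mathcal{P}$ is legitimate under the paper's substitution-based definition of equality, via the standard theorem that $A = B$ if and only if $A + (-B)$ is a second-player win, together with $-(*n) = *n$; it would be worth citing that theorem explicitly (it is in \cite{SiegelCGT:2013}), since the paper's definition of equality is the indistinguishability one and the equivalence is a theorem, not a definition. Your case analysis is exhaustive --- every opening move is either in the $*n$ component, to a shared low option $*m$ with $m < n$ of $G$, or to a one-sided high option --- and you correctly isolate the crux: a high option $*x_i$ or $*y_j$ is neutralized \emph{inside its own component} by the reply $*p \to *n$, which is legal precisely because of the hypothesis $p > n$, and which is available to either responder because the component has become impartial. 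After each reply the position is $*m + *m$ or $*n + *n$, a $\mathcal{P}$-position, so the strategy can be sustained recursively. One small remark: in the degenerate case $n = 0$ your first two cases are vacuous and the claim reduces to showing that a no-0 superstar is a $\mathcal{P}$-position (respond $*p \to 0$), which your third case already covers; noting this explicitly would be worthwhile, since that special case is exactly \cref{no-0 games}, which the paper's main reduction relies on.
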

When a sum consists of only superstars of this form, the sum is reduced to a sum of nimbers, and is thus solvable in polynomial time. As we show in the main result of our paper in section \ref{sec:reduction}, solving a sum of superstars in \cclass{NP}-hard.

\subsection{Naming Superstars}
\label{sec:superstarTerminology}

There is some historical overloading of the term ``superstar'' in two foundational CGT texts, which share an author.  In \textit{Winning Ways}, first published in 1982, superstars are defined as we use them here\cite{WinningWays:2001}.  In \textit{On Numbers and Games}, first published in 1976, the same term is used to describe specific sums of (Winning Ways) superstars.  In 2023, Silva et.\ al.\cite{SILVA2023113665} used another term, \emph{quasi-nimbers}, because they were aware of the On Numbers and Games definition, but not the one from Winning Ways.  

We do not make this choice lightly.  The terminology collision was not known until parts of this paper was presented in the Virtual Combinatorial Games Seminar\footnote{\url{https://sites.google.com/view/virtual-cgt/seminar}} in 2023.  (No one at the seminar was aware of both definitions beforehand.)  We solicited informal advice from the greater CGT community.  Based on that, we chose to use the term ``superstars'', as in Winning Ways.  Although this deviates from the first-published choice in On Numbers and Games, we are comfortable going forward with this because:
\begin{itemize}
    \item We are still using historical terminology.
    \item As pointed out by Neil McKay, ``superstar'' is nice because these games are one move above stars (nimbers) in a game tree.
    \item Only one published paper uses the term superstar in either context since the two books have been published.\footnote{Combinatorial Game Theory \cite{SiegelCGT:2013}, a text published in 2013, references the Winning Ways version in an exercise.}
\end{itemize}

This only solves the issue with superstars from Winning Ways.  In order to handle the objects described as superstars in On Numbers and Games, we propose a new term, \emph{comets}, and use that throughout.  We like this term because comets are bright celestial objects like (super)stars, but have very little mass in comparison\footnote{Although we do not provide the details of this property here, these comet positions have zero atomic weight.}.  Additionally, the alliteration of ``Conway Comets'' works nicely.

We hope that our chosen terminology will continue to be used going forward.

\subsection{The Game of \ruleset{Paint Can}}

\newcommand{\rsPC}{\ruleset{Paint Can}}

\rsPC\ \footnote{A playable version of \rsPC\ is available online at: \url{http://kyleburke.info/DB/combGames/paintCan.html}.} is a pleasant combinatorial game ruleset that models superstars \cite{SILVA2023113665}.


\begin{definition}[\rsPC]
\rsPC\ is a combinatorial game played on stacks of bricks, each colored Red, Blue, Green, or Gray.  Each turn a player chooses a brick in one stack either of their own color or Green.  (No player can choose Gray bricks.)  The chosen brick and all bricks above it are then removed from the stack.  On top of each stack that has any non-Green bricks sits a can of green paint.  When any brick is taken from that stack, the can of paint spills, coloring all the remaining bricks in the stack Green.
\end{definition}

Under Normal Play, the last person to move on a \rsPC\ position wins.  Every superstar is equivalent to a \rsPC\ position with a single stack of bricks.  Starting with an index of zero for the bottom brick: if brick $i$ has color Blue, then $\ast i$ is only a Left option; if Red, $\ast i$ is only a Right option; if Green, $\ast i$ is both a Left and Right option; and if Gray, $\ast i$ is not an option for either player.  For example, the game \gameSet{0, \ast 2, \ast 4}{\ast, \ast 2} is equal to the stack with bricks colored (from bottom to top) Blue, Red, Green, Gray, and Blue.   
See \cref{fig:paintCan} for an example of a position with that same stack.  The entire position in the figure is equal to \gameSet{0, \ast 2, \ast 4}{\ast, \ast 2}  + $\ast 4$.


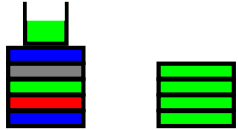
\begin{figure}[h!]
    \begin{center}
    \begin{tikzpicture}[node distance = .21cm, minimum size = .2cm, inner sep = .07cm, ultra thick]
        \node[draw, rectangle, minimum width = 1cm, black, fill = blue] (a) {};
        \node[draw, rectangle, minimum width = 1cm, black, fill = red] (b) [above of=a] {}; 
        \node[draw, rectangle, minimum width = 1cm, black, fill = green] (c) [above of=b] {}; 
        \node[draw, rectangle, minimum width = 1cm, black, fill = gray] (d) [above of=c] {}; 
        \node[draw, rectangle, minimum width = 1cm, black, fill = blue] (e) [above of=d] {}; 
        \node[rectangle] (ee) [above of=e] {};
        \node[draw, rectangle, minimum width = .55cm, minimum height = .53cm, black, fill = green] (f) [above of=ee] {}; 
        \node[draw, rectangle, minimum width = .45cm, minimum height = .28cm, white, fill = white] (g) [above of=f] {}; 

        \node[draw, rectangle, minimum width = 1cm, black, fill = green] (a2) at (2,0) {};
        \node[draw, rectangle, minimum width = 1cm, black, fill = green] (b2) [above of=a2] {}; 
        \node[draw, rectangle, minimum width = 1cm, black, fill = green] (c2) [above of=b2] {}; 
        \node[draw, rectangle, minimum width = 1cm, black, fill = green] (d2) [above of=c2] {}; 
    \end{tikzpicture}\end{center}
    \caption{A \rsPC\ position consisting of two stacks of bricks with value \gameSet{0, \ast 2, \ast 4}{\ast, \ast 2}  + $\ast 4$.  In the leftmost stack, the Blue player may choose to remove either of the blue bricks or the green brick.  The Red player may choose to remove either the red or green brick.  Neither player may choose to remove the gray brick.   In the rightmost stack, all bricks are already green, so no can of paint is necessary.  If the Blue player removes the top brick from the left stack, the result will be a stack of four green bricks, as is in the right stack.}
    \label{fig:paintCan}
\end{figure}

Any sum of superstars can be represented as an instance of \rsPC\, with each term equivalent to a single stack of bricks.  For example, to create a position equivalent to $\gameSet{0, \ast}{0, \ast 2} + \gameSet{\ast 2}{\ast 3} + \gameSet{0, \ast, \ast 2}{\ast}$, we color bricks in each of three stacks corresponding to which player has the nimber option that matches the index of the brick (starting at the bottom with index 0).  If both players have an option to $\ast i$, then brick $i$ is green.  If only Left has an option to $\ast i$, then brick $i$ is blue.  If only Right has an option to $\ast i$, then brick $i$ is red.  If neither player has an option to $\ast i$, then brick $i$ is gray, though we do not include gray boxes for $i$ higher than the nimbers in either option.   See \cref{fig:paintCanSum} for a position equal to the prior sum of superstars.  Thus, \rsPC\ is a ruleset where all superstars and sums of superstars occur; players need to evaluate them in order to determine which player can win.

\begin{figure}[h!]
    \begin{center}
    \begin{tikzpicture}[node distance = .21cm, minimum size = .2cm, inner sep = .07cm, ultra thick]
        \node[draw, rectangle, minimum width = 1cm, black, fill = green] (a) {};
        \node[draw, rectangle, minimum width = 1cm, black, fill = blue] (b) [above of=a] {}; 
        \node[draw, rectangle, minimum width = 1cm, black, fill = red] (c) [above of=b] {}; 
        \node[rectangle] (ee) [above of=c] {};
        \node[draw, rectangle, minimum width = .55cm, minimum height = .53cm, black, fill = green] (f) [above of=ee] {}; 
        \node[draw, rectangle, minimum width = .45cm, minimum height = .28cm, white, fill = white] (g) [above of=f] {}; 

        \node[draw, rectangle, minimum width = 1cm, black, fill = gray] (a2) at (2,0) {};
        \node[draw, rectangle, minimum width = 1cm, black, fill = gray] (b2) [above of=a2] {}; 
        \node[draw, rectangle, minimum width = 1cm, black, fill = blue] (c2) [above of=b2] {}; 
        \node[draw, rectangle, minimum width = 1cm, black, fill = red] (d2) [above of=c2] {}; 
        \node[rectangle] (ee2) [above of=d2] {};
        \node[draw, rectangle, minimum width = .55cm, minimum height = .53cm, black, fill = green] (f2) [above of=ee2] {}; 
        \node[draw, rectangle, minimum width = .45cm, minimum height = .28cm, white, fill = white] (g2) [above of=f2] {}; 

        \node[draw, rectangle, minimum width = 1cm, black, fill = blue] (a3) at (4,0) {};
        \node[draw, rectangle, minimum width = 1cm, black, fill = green] (b3) [above of=a3] {}; 
        \node[draw, rectangle, minimum width = 1cm, black, fill = blue] (c3) [above of=b3] {}; 
        \node[rectangle] (ee3) [above of=c3] {};
        \node[draw, rectangle, minimum width = .55cm, minimum height = .53cm, black, fill = green] (f3) [above of=ee3] {}; 
        \node[draw, rectangle, minimum width = .45cm, minimum height = .28cm, white, fill = white] (g3) [above of=f3] {}; 
    \end{tikzpicture}\end{center}
    \caption{A \rsPC\ position equal to $\gameSet{0, \ast 1}{0, \ast 2} + \gameSet{\ast 2}{\ast 3} + \gameSet{0, \ast, \ast 2}{\ast}$.}
    \label{fig:paintCanSum}
\end{figure}


\section{From Bits to Superstars: Hardness Reduction}\label{sec:reduction}

In order to show that sums of superstars (and \ruleset{Paint Can}) are \cclass{NP}-hard, we need to introduce some additional computational problems.

\textsc{XOR-SAT} \cite{SchaeferSATDichotomy} is a classical logical satisfiability problem consisting of a conjuction of clauses of the XOR of boolean literals.  That is to say, it takes this form: $(x_i \oplus x_j \oplus \cdots \oplus x_k) \wedge (x_l \oplus \cdots \oplus x_p) \wedge \cdots \wedge (x_q \oplus \cdots \oplus x_r)$. It is known that \textsc{XOR-SAT} is polynomial-time solvable \cite{SchaeferSATDichotomy}. 

Our next problem, which is \cclass{NP}-hard, is motivated by \textsc{XOR-SAT}.  It uses \emph{multi-state variables}, which can be assigned to one of many states instead of just True and False.  Each literal of a variable is labelled with one of those states (e.g. $x_{a, s_i}$) and is only true if the variable is assigned to that state.  More formally, let $x_a$ be a \emph{multi-state} variable with possible states $s_1, s_2, \ldots, s_i, \ldots, s_k$, then for all states $s_i$ of $x_a$ we have

\[x_{a, s_i} = 
\begin{cases}
    \text{True} & \text{if }x_a \text{ is set to } s_i,\\
    \text{False} & \text{if } x_a \text{ is set to } s_j \text{ and } j \neq i.
\end{cases}\]

Figure \ref{fig:multistateVariable} displays a multi-state variable.

\begin{figure}[h!]
    \begin{center}
    \begin{tikzpicture}[node distance = .21cm, minimum size = .2cm, inner sep = .07cm, ultra thick]
        \node[draw, circle, minimum width = 4cm, black, fill=green!50] (x) [label = above:{$x_a$}]{};
        \node[] (atX) {};
        \node[draw, circle, minimum width = 1cm, black, fill=white] (s1) [above = .5cm of atX] {$s_1$};
        \node[draw, circle, minimum width = 1cm, black, fill=green!50] (s2) [right = .5cm of atX] {$s_2$};
        \node[draw, circle, minimum width = 1cm, black, fill = blue!30] (s3) [below = .5cm of atX] {$s_3$};
        \node[draw, circle, minimum width = 1cm, black, fill=red!50] (s4) [left = .5cm of atX] {$s_4$};
    \end{tikzpicture}\end{center}
    \caption{Multi-state variable $x_a$ with four possible states: $s_1$, $s_2$, $s_3$, $s_4$.  The overall color indicates that the chosen state is $s_2$.}
    \label{fig:multistateVariable}
\end{figure}
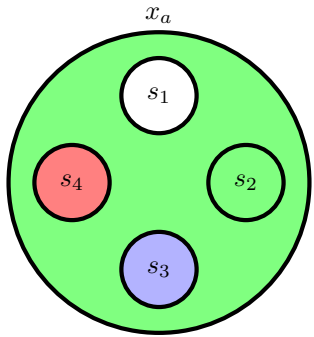

\begin{definition}
\ruleset{Multistate XOR-SAT} is a ruleset where a position is a conjunction of clauses consisting of the XOR of multi-state literals instead of boolean literals.  In other words, the clauses are of the form \((x_{i, s_i} \oplus \cdots \oplus x_{j, s_j})\). 
Variables are divided between the two players, $X$, and $Y$, and clauses may contain variables from both players, e.g. $(x_{i, s_i} \oplus y_{j, s_j} \oplus \cdots )$.  On their turn, the current player selects one of their unassigned variables and picks a state to assign it to.  Once both players have assigned all variables, $X$ wins if the formula is true, and $Y$ wins if the formula is false.  If both players have the same number of variables, then we call it \textit{Equal-Partitioned Multistate-XORSAT}, or EPMX.
\end{definition}

\newcommand{\mxor}{\epmx}

\newcommand{\epmx}{%
\textsc{EPMX}\xspace}

We will show that \mxor is \cclass{NP}-hard after we show the reduction from \mxor to a sum of super stars (AKA \ruleset{Paint Can}).

To reduce from \mxor, we must first discuss elementary strategies in a sum of superstars. Do aid in this, we partition superstars into six classes:
\begin{itemize}
    \item nimbers,
    \item \emph{no-0}: neither player has 0 as one of their options,
    \item \emph{left-0}: only Left has a move to 0,
    \item \emph{right-0}: only Right has a move to 0,
    \item \emph{both-0}: both players have moves to 0, and
    \item \emph{one-sided}: one player has no options while the other player does.
\end{itemize}

First, an observation that follows directly from \cref{thm:superstarSimplified}:

\begin{corollary}[No-0 games]\label{no-0 games}
    A No-0 game has value 0.
\end{corollary}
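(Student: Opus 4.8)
The plan is to derive this directly from \cref{thm:superstarSimplified} by specializing its parameter $n$ to $0$. First I would unpack the definition of a No-0 game: it is a superstar in which neither Left nor Right has $0 = \ast 0$ among their options. Since every option of a superstar is by definition a nimber, this means that every Left option and every Right option is some $\ast m$ with $m \geq 1$; in particular, no option has index $0$.

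Next I would match this position against the normal form in \cref{thm:superstarSimplified}. Taking $n = 0$, the shared initial segment $0, \ast, \ldots, \ast(n-1)$ is empty, so the proposition asserts that any superstar of the form $\{\ast x_1, \ldots, \ast x_k \mid \ast y_1, \ldots, \ast y_l\}$ with every $x_i, y_j > 0$ is equal to $\ast 0 = 0$. The No-0 hypothesis supplies exactly this: writing Left's options as $\ast x_1, \ldots, \ast x_k$ and Right's as $\ast y_1, \ldots, \ast y_l$, each index is at least $1$, hence strictly greater than $n = 0$. Thus the hypotheses of the proposition hold with $n = 0$, and the game equals $\ast 0 = 0$.

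The only point to check carefully is that the inequality $x_i, y_j > n$ required by \cref{thm:superstarSimplified} genuinely holds here, and this is immediate once $n$ is set to $0$: the statement ``neither player has a move to $0$'' is precisely the statement that all option indices exceed $0$. I therefore do not expect any real obstacle — this is a direct specialization of the preceding proposition rather than an independent argument, and essentially all of the content lies in recognizing that the \emph{No-0} condition is exactly the $n = 0$ instance of the simplified superstar form.
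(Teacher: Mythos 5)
Your proof is correct and matches the paper's intent exactly: the paper offers no separate argument, stating only that the corollary ``follows directly from'' \cref{thm:superstarSimplified}, and the intended derivation is precisely your $n=0$ specialization, where the shared prefix $0, \ast, \ldots, \ast(n-1)$ is empty and the No-0 condition gives $x_i, y_j > 0$. Nothing further is needed.
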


Then, we will prove the following lemma:

\begin{lemma}[0 game win]\label{win strat}
Consider a sum of superstars with no both-0 games nor one-sided games. If at the start of the Left player's turn there are more left-0 games than right-0 games, then Left wins. Similarly, if at the start of the Right player's turn there are more right-0 games than left-0 games, then Right wins.
\end{lemma}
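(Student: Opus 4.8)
\emph{Reduction.} First I would prune the position. By the No-$0$ corollary, every no-$0$ summand equals $0$, so I may delete all of them without changing the winner. Since a sum of nimbers equals a single nimber, I may also replace all nimber summands by one nimber $\ast m$ (their nim-sum). After this, the position is $\ast m$ plus some left-$0$ games and some right-$0$ games, with counts $\ell > r$ and Left to move. The hypothesis forbids both-$0$ and one-sided summands, and this restricted shape is \emph{preserved} under play: a move on a left-$0$ or right-$0$ game sends it to one of its options, which is a nimber, and never creates a fresh both-$0$, one-sided, or no-$0$ component. It therefore suffices to prove the claim for positions of this form.

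\emph{Main route via atomic weights.} The cleanest plan is to invoke standard atomic-weight theory for all-small games (Berlekamp; see the cited CGT references). The two facts I need are that every left-$0$ game has atomic weight $+1$ and every right-$0$ game has atomic weight $-1$, while nimbers and no-$0$ games have atomic weight $0$. These follow from the far-star test, using crucially that in a superstar \emph{every} option is a nimber and hence has atomic weight $0$: for a left-$0$ game $G$ and a remote star $\ast N$ (with $N$ larger than every nimber in sight), Left wins $G+\ast N$ moving second, because whatever lone nimber Right leaves, Left answers by moving it to $0$; this forces $\mathrm{aw}(G)\ge 1$. Atomic weight is additive, so the whole position has atomic weight $\ell-r\ge 1$, and I would finish by quoting the standard outcome theorem that $\mathrm{aw}(P)\ge 1$ implies Left wins moving first (and $\mathrm{aw}(P)\ge 2$ implies $P>0$ outright). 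The delicate point here is pinning the value to \emph{exactly} $+1$ rather than merely $\ge 1$, i.e.\ the matching upper bound $\mathrm{aw}(G)\le 1$; this is needed so that additivity yields precisely $\ell-r$ rather than something an unboundedly negative right-$0$ game could overwhelm.

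\emph{Elementary alternative and the main obstacle.} To avoid atomic-weight machinery I would instead induct on the total number of tokens, strengthening the statement to track simultaneously whose turn it is, the sign of $\ell-r$, and the nim-value $m$. Several branches are easy: if $m\ne 0$, Left zeroes the nim-value and recurses; if $r=0$, the position is a sum of games each $>0$ and hence $>0$; and whenever Left can reach atomic weight $\ge 2$ (for instance by converting a right-$0$ game into a nimber while keeping $\ell-r\ge 1$), the position is $>0$ and Left wins regardless of turn. The genuine obstacle is the borderline $\ell=r+1$ together with a nonzero nim-value, where naive moves fail: zeroing the nim-value can drop the position to the tie case $\ell=r$, which need \emph{not} be a Left win, since a general right-$0$ game can be strictly more dangerous than $\downarrow$ in fuzzy positions. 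The correct response instead exploits nim-arithmetic among the components' internal nimber options (for example, moving a right-$0$ game to one of its Left nimber options so that the star values recombine favorably). Handling this borderline is exactly where the work lies, and it is precisely why the hypothesis demands a \emph{strict} majority and conditions on whose turn it is: when $\ell-r=1$ and $m\ne 0$ the position is fuzzy, so only the player to move can claim the win.
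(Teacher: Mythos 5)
Your main route is not the paper's: it is precisely the atomic-weight argument that the paper attributes to Silva et al.\ \cite{SILVA2023113665} and then deliberately sidesteps (``We provide the following proof that avoids use of atomic weights''). As a route it is sound, and your quoted outcome theorem is fine: $\mathrm{aw}(G) \ge 1$ does imply Left wins moving first, since $G \le 0$ would give $G + \ast N \le \ast N$, contradicting the far-star consequence $G + \ast N > 0$. But as written the proof has a hole that you flag and never fill: your far-star test yields only the lower bound $\mathrm{aw}(G) \ge 1$ for left-$0$ games, and without the matching bounds $\mathrm{aw}(\text{left-}0) \le 1$ --- equivalently $\mathrm{aw}(\text{right-}0) \ge -1$ --- additivity proves nothing, because a single right-$0$ summand of weight $\le -2$ would swamp the majority $\ell - r = 1$. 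These bounds are true (via the mirrored far-star test applied to $G + \downarrow$, or via the atomic-weight calculus, using that every option of a superstar is a nimber of weight $0$), and they are established in \cite{SILVA2023113665}, so the gap is closable by a citation or one more short computation; still, ``the delicate point'' must be carried out, not merely named.

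The paper's actual proof is genuinely different: an elementary, self-contained strategy. Call the majority player $A$. First, $A$ spends turns converting every $B$-$0$ game into a nimber (the strict majority guarantees $A$ can afford this, and no move ever creates a new $0$-type component, since every option of a superstar is a nimber). Afterwards $A$ avoids their own $A$-$0$ games until either only $A$-$0$ games remain --- in which case $A$ trades moves to $0$ and wins on parity --- or exactly one remains, kept as a tempo reserve: by \cref{no-0 games} the rest of the position is, up to value-$0$ summands, a nim position, so if its nim-sum is $0$ then $A$ plays the reserved option to $0$ and wins moving second; otherwise $A$ restores nim-sum $0$ each turn until $B$ is forced to play on the reserved $A$-$0$ component, converting it to a nonzero nimber and handing $A$ a winning nim position. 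Note that this is exactly the $\ell = r+1$-with-nonzero-nimber borderline on which your elementary sketch stalls; the resolution is not cleverer nim-arithmetic inside the opponent's components but the reserved move to $0$ plus turn-order bookkeeping. As for what each approach buys: yours, once the weights are pinned down, is a one-line consequence of standard theory and yields more information (the exact weight $\ell - r$, hence outright positivity when $\ell \ge r+2$ by the two-ahead rule); the paper's is constructive, uses only \cref{no-0 games} and nim-sums, and is preferable in context precisely because avoiding atomic weights was the authors' stated aim. (One minor slip in your sketch: with $r = 0$ the position need not be $> 0$ because of the nimber term $\ast m$; it is only a win for Left moving first, which is all the lemma claims.)
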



It is possible to prove \cref{win strat} using atomic weights and the two-ahead rule, as shown in \cite{SILVA2023113665}.  We provide the following proof that avoids use of atomic weights.

\begin{proof}
We will call the winning player $A$ and the losing player $B$, so there are more $A$-0 games than $B$-0. We will prescribe the following algorithm for $A$ to win: they can ``eliminate'' 0s in the $B$-0 games by making a move on the game (by choosing one of their nimber options arbitrarily). They repeatedly do this until no $B$-0 games remain. Now, after $B$'s following turn, the remaining games include at least one $A$-0 game, some nimbers (maybe none), and some no-0 games (maybe none). At this point, $A$ should avoid playing $A$-0 games until there are only $A$-0 games remaining, or there is exactly one $A$-0 game left (along with the other types of games), whatever comes first.

If there are only $A$-0 games remaining, then for the first of those games, $A$ can just bring a game to 0, and then if there are any games left, $B$ has to make one into a nimber, which $A$ can just bring to 0. This will repeat until the last $A$-0 game is taken this way, in which case $A$ wins.

If there is exactly one $A$-0 game (along with potential no-0 games and nimbers), then $A$ can identify the value of the sum of everything but the single game by XORing the nimbers (by observation \ref{no-0 games}, the others have value 0). If the nim-sum is 0, then $A$ may take the move to 0 and thus wins the game. Otherwise, they can bring the nim-sum to 0 and inductively keep it so until $B$ plays on $A$-0, bringing the game to a non-zero nim sum, which $A$ can then win from.
\end{proof}

With this lemma, we can prove the following theorem:

\begin{theorem}
There exists a polynomial-time reduction from Equal-Partitioned Multistate-XORSAT (\epmx) to Sum of superstars, such that if True wins going first on \epmx, then the outcome class of Sum of superstars is $\mathcal{L}$ or $\mathcal{P}$  (i.e. Left wins going second).
\end{theorem}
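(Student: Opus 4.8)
The plan is to build, from an \epmx instance, a sum of superstars in which Left plays the role of True ($X$) and Right the role of False ($Y$), so that the algebraic condition ``the formula is satisfied'' becomes the nim-theoretic condition ``the fully-resolved position has nim-value $0$.''

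First I would fix the encoding. Number the clauses $1,\dots,m$ and reserve one bit position per clause. For each variable $x_a$ and each of its states $s$, define a nimber $N_{a,s}$ whose $c$-th bit is $1$ exactly when the literal $x_{a,s}$ occurs in clause $c$. The multi-state variable $x_a$ becomes a superstar whose owner's intended moves are the moves to $*N_{a,s}$ over the states $s$, so that the assignment $x_a\mapsto s$ corresponds to collapsing that superstar to the nimber $*N_{a,s}$. The point of the bit encoding is that after a full assignment $\sigma$, the nim-sum $\bigoplus_a *N_{a,\sigma(a)}$ has its $c$-th bit equal to the parity of the number of true literals in clause $c$, hence equal to $1$ iff clause $c$ is satisfied. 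I would then adjoin one fixed nimber component equal to $*(2^m-1)$, so that the nim-value of the resolved position is $0$ \emph{iff every clause is satisfied}, i.e.\ iff the formula is true. The reduction is clearly computable in polynomial time, since each $N_{a,s}$ is read directly off the clause list.

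Second, I would arrange the counting so that \cref{win strat} governs the endgame. I make each $X$-variable a \emph{left-0} superstar and each $Y$-variable a \emph{right-0} superstar, giving the owner a move to $0$ and giving the opponent only ``large'' nimber options above every $N_{a,s}$, so that each component stays a genuine left-0 (resp.\ right-0) superstar rather than collapsing via \cref{thm:superstarSimplified}, and so that no both-0 or one-sided games appear. Because the partition is equal, the instance begins with the same number of left-0 and right-0 games, and the intended line of play has each player resolving its own variables, mirroring \epmx. Once all $2n$ variables are resolved, the position is a pure sum of nimbers whose value is $0$ iff the formula is satisfied. A parity count is the key: with Left moving \emph{second}, it is Right who is forced to move into the fully-resolved nim position, so Right loses exactly when that value is $0$, and therefore Left wins going second iff the formula is satisfied. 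I expect the apparent turn-order mismatch (``$X$ moves first in \epmx'' versus ``Left moves second here'') to be resolved precisely here: in the induced assignment game Left is the \emph{responder} trying to steer the final nim-sum to the fixed target $0$, and the responder's ability to cancel is what converts ``$X$ wins going first'' into ``Left wins going second'' (supplemented, if necessary, by a single balancing tempo component).

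Third, and this is the crux, I would show that honest variable-by-variable play is optimal, so the simulation is faithful. Given a winning strategy for $X$, Left copies it to decide which state-nimber to collapse each of its superstars to, and I would argue that every \emph{deviation} by Right is punished. The deviations are: Right moving on one of Left's left-0 superstars, Right collapsing to an unintended (large) nimber, Right spending its reserve $0$-move early, or Right opening a side nim-battle among already-resolved nimber components. Each such move perturbs either the signed count of $0$-games or the running nim-sum of the resolved nimbers, and I would dispatch them using \cref{no-0 games} (no-0 components are inert) together with the explicit algorithm in the proof of \cref{win strat}: eliminate the opponent's $0$-games, then control the nim-sum using the last remaining $0$-game as tempo. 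The main obstacle is exactly this faithfulness argument — checking that no exotic interleaving of assignment-moves, $0$-moves, and nimber-reductions lets Right escape, and confirming the parity so the implication runs in the stated direction. I expect the clean form of \cref{win strat} to carry most of the weight, reducing the whole analysis to bounding how each deviation moves the two governing quantities: the signed count of $0$-games and the nim-sum of the resolved nimbers.
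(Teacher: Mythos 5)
Your clause-bit encoding is exactly the paper's: clause $t$ gets identity $2^t$, each state of a variable becomes a nimber option whose bits record the clauses containing that literal, and a fixed heap $\ast(2^m-1)$ is adjoined so that the fully resolved position has value $0$ iff every clause is satisfied. But your gadget orientation is inverted, and this is fatal. You give each player the $0$-move on \emph{their own} variables ($X$-variables left-0, $Y$-variables right-0). Under \cref{win strat}, a player's own $0$-games are assets: whoever shows a surplus of their own $0$-games at the start of their turn wins outright by the lemma's strategy. So in your construction an honest assignment is self-destructive: the moment Right collapses one of its own right-0 gadgets to a state nimber $\ast N_{a,s}$, the left-0 count exceeds the right-0 count at the start of Left's turn, and Left wins unconditionally by \cref{win strat} --- irrespective of the formula. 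Symmetrically, Left can never afford to assign either. The play your design actually forces is for each player to clobber the \emph{opponent's} gadgets with the large options you handed them, so no assignments are ever made and the outcome decouples from the \textsc{EPMX} instance entirely. The paper orients the gadgets the other way: each $x_i$ is a \emph{right}-0 game (Right's only option there is $0$, while Left holds the state nimbers plus $\ast 2^{m+j}$), and each $y_i$ is a \emph{left}-0 game. With that orientation, an honest assignment by either player \emph{is} the lemma-optimal move of eliminating one of the opponent's $0$-games, which is precisely what forces the alternating variable-by-variable simulation you were hoping to certify in your third step.

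The placement of the large nimbers is wrong for the same reason. In the paper they are extra \emph{Left} options $\ast 2^{m+j}$ on Left's own gadgets $x_j$ (the $y_i$ deliberately have none), and their sole job is to punish Right for stealing tempo on the $\ast(2^m-1)$ heap: if Right plays there, Left injects a value at least $2^m$ that Right has no means to cancel, so the ending condition Right tried to usurp becomes unreachable. In your design the large options sit with the opponent on every gadget, so in particular Right owns uncancellable large nimbers and can wreck Left's target of final nim-value $0$ at will. (One point where you are ahead of the paper's exposition: you flag the turn-order mismatch. Since only the final assignment matters in \textsc{EPMX}, a first-player winning strategy for $X$ converts to a second-player one by a one-move-delayed simulation of $Y$'s revealed moves; the paper uses this silently. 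But it does not rescue the construction.) In short: right encoding, right endgame lemma, but the $0$-moves and the large options must be swapped between owner and opponent before the faithfulness argument you explicitly deferred can even begin.
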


\begin{proof}
Let $X$ be the player whose goal is to make the formula true in \epmx, $Y$ be the player whose goal is to make it false, and $m$ be the number of clauses.  We assume that the \epmx formula contains literals for each state of each variable.  (If it doesn't, we can create a dummy clause that will always be true for each missing variable missing a state.  That clause contains one copy of each state that variable can have.) 

We will use the following construction: 
First, we will assign the $t^{th}$ clause an identity $z_t = 2^{t}$. 
In other words, we use power of twos $\{1, 2, 4, \dots, 2^{m-1}\}$ to identify clauses. 
For the $i^{th}$ variable assigned to $X$, we will create a Right-0 game which we will call $x_i$, and for the $i^{th}$ variable assigned to $Y$, we will create a Left-0 game we will call $y_i$. Right's options for each $x_i$ contain only a single option of 0, and similarly, Left's options for $y_i$ contain only 0. Then, for each possible state of the $i^{th}$ variable for $X$, there will be a Left option in $x_i$ to a nimber whose value is, for each clause that contains the variable at that state, the sum of their corresponding identity values (i.e., if the $t^{th}$ clause is involved, then $z_t$ is included in the sum). Similarly, for each state of variable $y_i$ for $Y$, there will be a Right option in $y_i$ to a nimber with value equal to the sum of the corresponding identity values of the involved clauses. In addition, for each game $x_j$, there will be an additional option of $\ast 2^{m + j}$.  (The $y_i$ positions do not have this extra option.) 

The game we consider is then 
\[ G = x_0+\ldots+ x_k+y_0+\ldots+y_\ell+\ast (2^m - 1).\]

For example, the position $( x_{0, a} \oplus x_{1, a} \oplus y_{0, a} \oplus y_{1, c}) \wedge (x_{0, b} \oplus x_{1, a} \oplus y_{0, b} \oplus y_{1, a}) \wedge (x_{1,a} \oplus x_{1,b}) \wedge (y_{1,a} \oplus y_{1,b} \oplus y_{1,c})$ with states $x_0: \{a, b\}$, $x_1: \{a, b\}$, $y_0: \{a, b\}$, $y_1: \{a, b, c\}$ reduces to $x_0 + x_1 + y_0 + y_1 + \ast 15$, where
\begin{itemize}
    \item $x_0 = \gameSet{\underbrace{\ast}_{a}, \underbrace{\ast 2}_b, \underbrace{\ast 16}_{2^{m+0}}}{0}$
    \item $x_1 = \gameSet{\underbrace{\ast 7}_{a}, \underbrace{\ast 4}_b, \underbrace{\ast 32}_{2^{m+1}}}{0}$
    \item $y_0 = \gameSet{0}{\underbrace{\ast}_a, \underbrace{\ast 2}_b}$
    \item $y_1 = \gameSet{0}{\underbrace{\ast 10}_a, \underbrace{\ast 8}_b, \underbrace{\ast 9}_c}$
\end{itemize}
In the example above, the identity of these four clauses are respectively, $1$, $2$, $4$, and $8$.

Now we demonstrate the correctness of the reduction. As mentioned in the theorem statement, the union of $\mathcal{L}$ and $\mathcal{P}$ is equivalent to proving that Left wins going second.  We will show that the game should progress by alternating moves of Right playing on left-0s and Left playing on right-0s.  Since the options of those components are all nimbers, each of these plays changes the whole game by removing that component and modifying the nimber term.  If this pattern is followed, then after Left plays on the final right-0, they win if and only if the nimber term has been reduced to zero. Otherwise, Right can bring the nin-sum to 0, and then win through following the nim strategy.

If both continue to hold to that pattern, at the beginning of each of Left's turns, there is one more right-0 component than left-0.  Thus, Left must play on a right-0 component or they will lose by \cref{win strat}.  Right starts each turn with balanced left-0s and right-0s, but they still need to follow the pattern.  If Right deviates by playing on an $x$ game, then they will lose by \cref{win strat}.  
If Right plays on the nimber term instead, this switches the roles of the players with respect to their ending conditions; now Right will win if and only if the nimber term is reduced to zero when they make their final move.  Left, however, can avoid this by playing on one of the large nimber (with value at least $2^m$) options included in any $x$.  Right doesn't have any options that can cancel out that large nimber, so the final nimber term will always be non-zero and Right can no longer win.  


Following the prescribed sequence of play, Left wins if the nimber term equals zero.  Since it started at $\ast 2^m -1$, the sum of all nimber options chosen must also equal $\ast 2^m -1$.  If Left has a winning strategy in \epmx, they can play on the nim values corresponding to each variable state in their winning strategy, which must then result in a final nimsum of 0. If they do not have a winning strategy, then note that playing on the $2^{m + j}$ values can't give them a chance to win, since Right can stick with their \epmx strategy and the nimsum can't equal 0. Thus, Right can follow Y's winning strategy of assignments to result in a non-zero value.
\end{proof}

Now we will show NP-hardness for \mxor.  

\newcommand{\sat}{\textsc{3SAT}}

\begin{theorem}\label{theo:MXOR}
    There exists a polynomial-time reduction from \textsc{3SAT} to \mxor.
\end{theorem}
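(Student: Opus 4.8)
The plan is to reduce from \textsc{3SAT} by first constructing a pure satisfiability instance of multistate XOR and then wrapping it in the game so that the opponent is powerless. The main obstacle is that a single XOR clause computes a \emph{parity} of its true literals, whereas a \textsc{3SAT} clause asks for a \emph{disjunction}; since parity constraints alone define an affine subspace over $\mathrm{GF}(2)$ and are solvable in polynomial time, all of the hardness must come from exploiting the one-hot nonlinearity of multistate variables (exactly one literal of a given variable is true). I would bridge disjunction to parity with a per-clause \emph{selector variable}.

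Concretely, first I would encode each Boolean variable $v_a$ as a two-state variable with states $\mathsf{T}$ and $\mathsf{F}$, so that the literal $v_a$ corresponds to $[v_a = \mathsf{T}]$ and $\lnot v_a$ to $[v_a = \mathsf{F}]$. I would also introduce one constant-true literal via a variable with a single state, which lets me impose a parity constraint with right-hand side $0$ (an equality $p \oplus q = 0$, i.e.\ $p = q$) as the legal clause $p \oplus q \oplus 1$. For each clause $C = (\ell_1 \vee \ell_2 \vee \ell_3)$ I would add a selector variable $W_C$ whose states are the \emph{nonempty} subsets of $\{1,2,3\}$ (seven states), intended to record exactly which literals of $C$ are true. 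The key gadget is, for each position $j$, the equality clause
\[
[\ell_j\text{ true}] \;=\; [\,W_C \ni j\,],
\]
where $[\ell_j\text{ true}]$ is a single-state literal of the underlying variable and $[\,W_C \ni j\,] = \bigoplus_{U \ni j} [W_C = U]$ is a multistate membership literal; each equality is written as an XOR clause using the constant. One-hotness of $W_C$ makes these seven states act as a genuine categorical choice, and because the empty set is excluded, any satisfying assignment forces at least one $\ell_j$ true---exactly the disjunction---while a clause-satisfying Boolean assignment can always be extended by setting $W_C$ to its true-literal set. I would then verify both directions: a \textsc{3SAT} model yields a multistate model by this extension, and conversely the equality clauses force the set of true literals of $C$ to equal the nonempty set recorded by $W_C$.

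Finally, to turn this satisfiability instance into an \epmx\ instance I would assign every constructed variable (the $v_a$, the $W_C$, and the constant) to the True player $X$, and then pad the False player $Y$ with the same number of \emph{dummy} variables that occur in no clause, so the partition is equal. Since $Y$'s variables are inert and the players assign disjoint variable sets, $Y$ cannot affect the truth of the formula and cannot restrict $X$'s choices; hence the final formula is true under optimal play iff $X$ alone can satisfy it, iff the multistate instance is satisfiable, iff the original \textsc{3SAT} formula is satisfiable. I expect the only delicate points to be the bookkeeping that each equality constraint is a syntactically valid XOR clause and the edge cases where a clause repeats a variable; the conceptual crux is entirely the selector gadget that converts ``at least one true'' into the one-hot choice of a single variable. (If one prefers smaller gadgets, the same idea with a three-state witness selector reduces from \textsc{Monotone 1-in-3 SAT}, to which \textsc{3SAT} reduces by a standard transformation.)
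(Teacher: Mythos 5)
Your reduction is correct, but it takes a genuinely different route from the paper's. The paper reduces from a \emph{restricted} \sat{} (each variable occurring at most three times, at least once in each sign, and an odd number of variables) and uses no auxiliary per-clause variables at all: each Boolean variable becomes a single five-state variable whose states encode which of its occurrences are ``used'' to satisfy clauses (none; the positive occurrence; both negative occurrences; only the first negative; only the second negative), so a satisfying Boolean assignment translates into a choice of states covering every clause exactly once, and conversely any assignment making every clause-XOR odd yields consistent truth values; two always-true clauses $c_x$ and $c_y$ (true because an odd number of one-hot variables feed each) absorb the spare states and the inert $Y$-side dummies. You instead keep unrestricted \sat{}, give each Boolean variable two states, and convert disjunction into one-hotness via a seven-state selector $W_C$ per clause together with equality clauses $[\ell_j]=[W_C\ni j]$; this buys cleaner two-way correctness (no occurrence bounds, no parity bookkeeping on the number of variables) at the cost of extra per-clause variables and of clauses containing several literals of the same variable---which the definition permits, and which the paper itself does in $c_x$. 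Both constructions neutralize $Y$ identically (dummy variables, no meaningful decisions), so the game-theoretic layer is the same. Two small points to tighten: a one-state ``constant'' variable is borderline under the multi-state definition, but you can instead put \emph{both} states of a two-state variable into the clause (their XOR is identically true, the same trick the paper uses for its dummy clauses), or dispense with the constant entirely by using the complementary membership literal $[W_C\not\ni j]=\bigoplus_{U\not\ni j}[W_C=U]$, since one-hotness gives $[W_C\not\ni j]=1\oplus[W_C\ni j]$; and if your \epmx{} instance is later fed into the paper's superstar reduction, your $Y$-dummies occurring in no clause are exactly the case that reduction already patches with dummy clauses, so nothing breaks downstream.
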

\begin{proof}
Let $X$ be the player whose goal is to make the formula true in EPMX, $Y$ be the player whose goal is to make it false. 
(In our reduction, $Y$ will not make meaningful decisions in the course of the game.)  Let $n$ be the number of variables and $m$ will be the number of clauses from the \sat\ instance.

Note that \sat\ is hard even if every variable appears only at most 3 times, at least once negated and at least once unnegated. It is also hard adding on a further restriction that there are an odd number of variables in the formula. We will assume both of these are true in this reduction.

For each clause in \sat, we will create a clause in \mxor. We will also have two separate clauses $c_x$ and $c_y$. For each variable ($x_i$) in \sat, we will create an variable ($x_i$) in \mxor\ with five states: $\{x_{i_a}, x_{i_b}, x_{i_c}, x_{i_d}, x_{i_e}\}$.  WLOG, we assume that $x_i$ appears once unnegated and twice negated, in clauses $r, s$, and $t$ in \sat\ respectively. Every state appears in $c_x$. In the \mxor\ formula, the first state ($x_{i_a}$) appears in no other clauses.  The second state, $x_{i_b}$, corresponds to the unnegated appearance, and appears in clause $r$.  The third state, $x_{i_c}$, corresponds to both negated \sat\ literals, and appears in clauses $s$ and $t$. The fourth state, $x_{i_d}$, corresponds to only the first negated \sat\ literal, and appears only in clause $s$. Finally, the fifth state, $x_{i_e}$, corresponds to only the second negated literal, and only appears in clause $t$.

In order to be an \epmx position, we need to include the same number of $y$ variables as $x$.  We can do this by creating $n$ dummy variables with two states each: $a$ and $b$. We can include all of the states of each of the variables into $c_y$. Note that since there are an odd number of $y$ variables, $c_y$ will always be true (the same is true of $c_x$). 

If there is a solution to the \sat\ formula, then a solution to \mxor\ can be constructed by iterating over the variables.  For each $x_i$, if the \sat\ assignment is true, then we choose $x_{i_b}$, unless the \mxor\ clause $r$ has already been satisfied, in which case we choose $x_{i_a}$.  If $x_i$ is assigned to false, then we select the correct choice of $a$, $c$, $d$, and $e$, depending on which of clauses $r$ and $t$ have already been satisfied.

The inverse direction is simpler: an assignment of $x_{i_a}$ means it doesn't matter what we pick.  $x_{i_b}$ means $x_i$ must be true to satisfy the \sat\ formula, and any of the others means it must be assigned to false.
\end{proof}





As explained in \cref{sec:superstarTerminology}, we introduce the term comet to refer to the objects called superstars in \emph{On Numbers And Games}\cite{ONAG:2000}.  
Each superstar has an associated comet.  We do not provide a full explanation of all cases of comets here.  However, if the superstar, $S$, is a left-0, then the comet will be $\downarrow + \ast + S$, where $\downarrow= \gameSet{\ast}{0}$. If $S$ is a right-0, then it will be $\uparrow + \ast + S$ where $\uparrow = \gameSet{0}{\ast}$. Finally if $S$ is a nimber, then $S$ is its own comet \cite{ONAG:2000, SILVA2023113665}. Note that $\downarrow + \uparrow = 0$.

\begin{corollary}
A sum of comets is \cclass{NP}-hard.
\end{corollary}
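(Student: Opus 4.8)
The plan is to reduce the already-established hardness of summing superstars directly to summing comets, exploiting the explicit comet-to-superstar dictionary recalled just above together with the identity $\downarrow+\uparrow=0$. I would not reprove hardness from scratch; instead I would take the sum-of-superstars instance
\[ G = x_0+\dots+x_k+y_0+\dots+y_\ell+\ast(2^m-1) \]
produced by the reduction from \epmx, and replace every summand by its associated comet to obtain a sum of comets $G'$. Because that reduction only ever creates right-0 games (the $x_i$), left-0 games (the $y_j$), and a single nimber, the dictionary applies termwise: each $x_i$ becomes $\uparrow+\ast+x_i$, each $y_j$ becomes $\downarrow+\ast+y_j$, and the nimber $\ast(2^m-1)$ is its own comet. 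This substitution is plainly computable in polynomial time.

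The heart of the argument is then a value computation showing $G'=G$. Expanding each comet and collecting like terms gives
\[ G' = (k+1)\,\uparrow + (\ell+1)\,\downarrow + \big((k+1)+(\ell+1)\big)\cdot\ast + x_0+\dots+x_k+y_0+\dots+y_\ell+\ast(2^m-1). \]
The crucial point is that \epmx\ is \emph{equal-partitioned}, so the number of $X$-variables equals the number of $Y$-variables and hence $k+1=\ell+1$. Writing $p$ for this common count, the $\uparrow$ and $\downarrow$ terms pair off into $p$ copies of $\uparrow+\downarrow=0$, while the $2p$ accumulated copies of $\ast$ cancel in pairs since $\ast+\ast=0$. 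Therefore all the extra material vanishes and $G'=G$.

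Since equal games share an outcome class, $G'$ lies in $\mathcal{L}\cup\mathcal{P}$ exactly when $G$ does, so $G'$ records whether True wins going first in the underlying \epmx\ instance. Composing with the reduction from \sat\ to \epmx\ in \cref{theo:MXOR} yields a polynomial-time reduction from \sat\ to Sum of comets, which establishes \cclass{NP}-hardness.

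I expect the only real delicacy—rather than a genuine obstacle—to be bookkeeping: confirming that every summand of $G$ falls into one of the three comet cases (so the termwise replacement is legitimate), and checking the parity of the accumulated $\ast$ and $\uparrow/\downarrow$ terms. The equal-partition hypothesis is doing the essential work, so it is worth emphasizing that this is precisely why the hardness was funneled through \epmx\ rather than through unrestricted \ruleset{Multistate XOR-SAT}.
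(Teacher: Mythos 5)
Your proposal is correct and follows essentially the same route as the paper: replace each right-0 summand $x_i$ by $\uparrow + \ast + x_i$, each left-0 summand $y_j$ by $\downarrow + \ast + y_j$, leave the nimber fixed, and observe that the $\uparrow$, $\downarrow$, and $\ast$ terms all cancel, so the sum of comets equals the original sum of superstars. Your explicit parity bookkeeping via the equal-partition hypothesis ($k+1 = \ell+1$) just spells out a cancellation the paper leaves implicit.
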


\begin{proof}
    We can express our sums of superstars and a nimber resulting from the reduction from \epmx as a sum of comets.  If we replace each $x_i$ with the comet $\uparrow + \ast + x_i$ and each $y_i$ with the comet $\downarrow + \ast + y_i$, then sum all of those comets, all the $\uparrow$ and $\downarrow$ components will cancel out and all of the individual $\ast$ will cancel each other out. Thus the game is the same sum of superstars.
\end{proof}

A minor note is that comets have something known as atomic weight 0, which also gives the result that sums of atomic weight 0 games can be NP-hard.

\section{From Logic to Board Game: \ruleset{Blackout}}\label{sec:lightswitching}

In this section, we use a simplified version of the logical game that appeared in our earlier analysis to design a two-player board game, which we call \ruleset{Blackout}.
The board of this game contains an array of light bulbs and two sets of switches, one above the light bulbs and one below. Two players, denoted by AllOff (Left) and OneOn (Right), each control one set of switches.

\begin{figure}[!ht]
    \centering
    \includegraphics[scale=0.25]{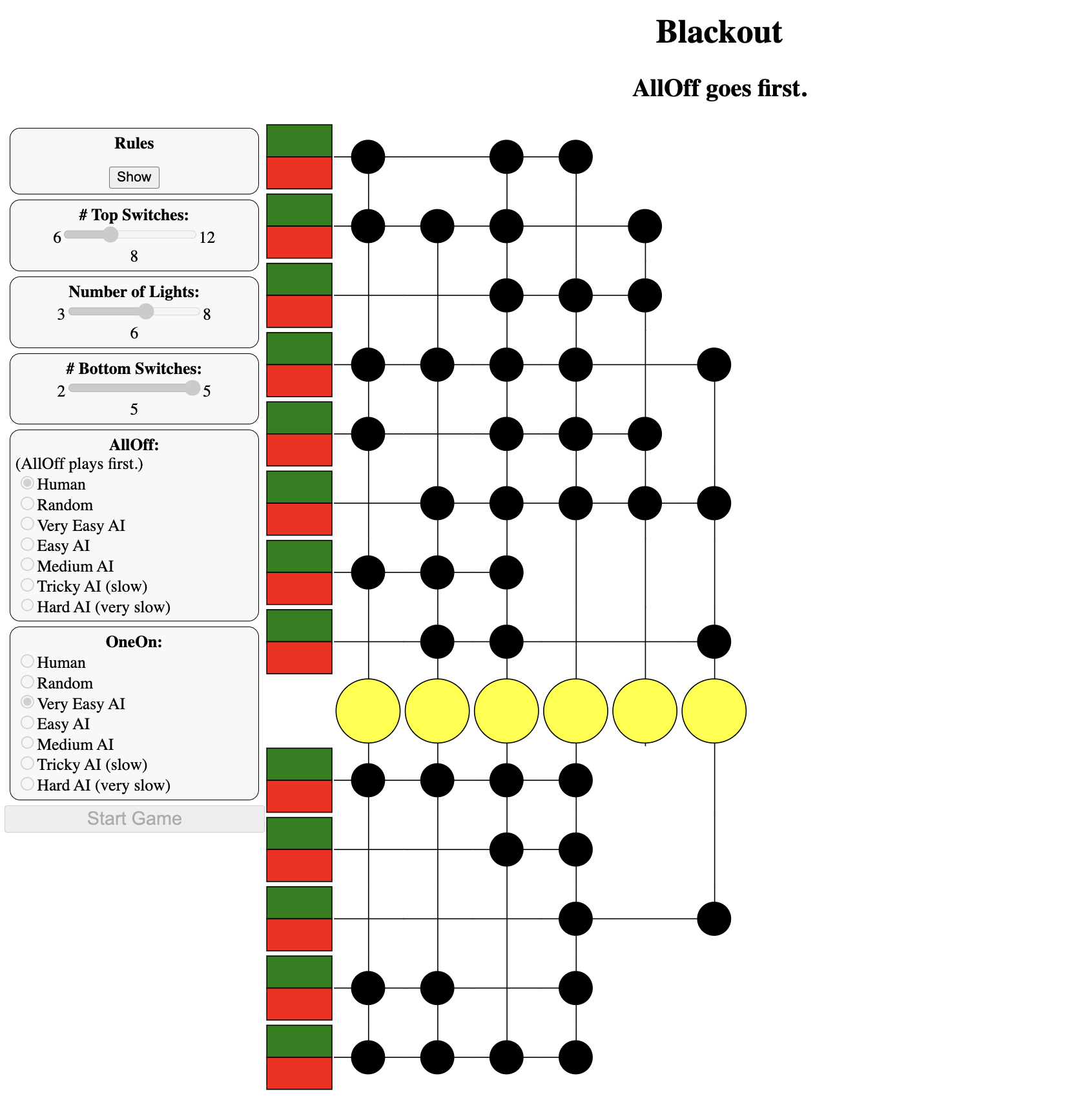}
    \caption{An Example Game of Blackout}
    \label{fig:blackout}
\end{figure}

The AllOff player wins if all lights are off at the end of the game. They control the switches at the top of the board. The OneOn player wins if at least one light is on in the end. They use the switches at the bottom of the board. On their turn, a player turns one of their unchosen switches off (red) or on (green). If they turn the switch on, then all of the lights that switch is connected to get toggled (off becomes on and on becomes off). The OneOn player has an easier objective, so they may have fewer switches. Once all bottom switches are played, they can pass as long as all the lights are not out. During these final turns, the AllOff player is searching for configuration of their remaining switches to turn all lights off in order to win.


\ruleset{Blackout} is in the Maker-Breaker style, but it differs from traditional Maker-Breaker games in one crucial aspect:  Even if AllOff turns all lights off at some point during the game, if OneOn still has some unflicked switches, then the game will continue until all switches for both players have been selected.
I.e., \ruleset{Blackout} could potentially continue even after all light bulbs have been turned off, while Maker-Breaker games end as soon as the desired structure has been formed (or continue until there are no moves).

\cref{fig:blackout} is an illustration from our two-dimensional board layout in our Web implementation, which can be found at the following link: 
\url{http://kyleburke.info/DB/combGames/blackout.html}.

\subsection{\ruleset{Blackout} Ruleset Formalities}

In this subsection, we discuss the mathematical representation of positions and rules in \ruleset{Blackout} to set up our computational analysis. 
For \ruleset{Blackout} games with $p$ lights, a position has three components
$P = (L,\alloffset,\oneonset)$:
\begin{itemize}
  \item [(a)] $L$ is a Boolean vector in $\{0,1\}^p$, indicating whether each of the lights is off or on.
  That is, $L(i) = 1$ denotes that the $i^{th}$ light is on.
  \item [(b)]
  $\alloffset$ is a Boolean matrix with  $p$ columns in which every row has at least one 1.
  Each row represents a switch that AllOff can use. 
  If the switch controls the $i^{th}$ bulb, then its $i^{th}$ entry equals to 1.
  \item [(c)] $\oneonset$ is a Boolean matrix with  $p$ columns, in which every row has at least one 1.
  Each row represents a switch that player OneOn can use. 
  If the switch controls the $i^{th}$ bulb, then its $i^{th}$ entry equals to 1.
\end{itemize}
 
In this position, 
we define the options as follows:

\begin{itemize}
    \item AllOff has two options for each row in $\alloffset$.  (If the matrix is empty, then they have no options remaining.)  For their turn, AllOff selects a row
$r$ and a binary action $\alpha \in \{0,1\}$.
If the action is $\alpha = 0$, then the position is moved to 
$(L,\alloffset_{-r},\oneonset)$, where $\alloffset_{-r}$ denotes the Boolean matrix obtained from $\alloffset$ by removing its $r^{th}$ row.
If the action is $\alpha = 1$,  then let $L'$ be the entry-wise exclusive-or of $L$ and the $r^{th}$ row of $\alloffset$, and the position is moved to $(L',\alloffset_{-r},\oneonset)$.
Note $L'$ represents the result when player AllOff activates its $r^{th}$ switch.
    \item OneOn's options depend on two cases: (1) If $\oneonset$ is not an empty matrix, then OneOn can select a row
$s$ and a binary action $\beta \in \{0,1\}$.
If the action is $\beta = 0$, then the position is moved to 
$(L,\alloffset,\oneonset_{-s})$, where $\oneonset_{-s}$ denotes the Boolean matrix obtained from $\oneonset$ by removing its $s^{th}$ row.
If the action is $\beta = 1$, 
then let $L'$ be the entry-wise exclusive-or of $L$ and the $s^{th}$ row of $\oneonset$,
  and position is moved to  $(L',\alloffset,\oneonset_{-s})$.
Note $L'$ represents the result when player OneOn flicks on its $s^{th}$ switch.  (2) If $\oneonset$ is an empty matrix (i.e., player OneOn has no more switches to flick),
then OneOn can make a pass move so long as there is at least one light on in $L$.  If the lights are all out, then they have no options available.\footnote{In order to prevent OneOn from making unbounded passes in the context of a game sum, they should have a maximum number of passes at the beginning equal to the difference in heights of the matrices, height($\alloffset$) $-$ height($\oneonset$).})
\end{itemize}

\subsection{The Intractability of \ruleset{Blackout}}

We now analyze the complexity of \ruleset{Blackout} and prove the following intractability result.

\begin{theorem}[Intractability of \ruleset{Blackout}]
 Deciding whether or not player {\em ALLOFF} has a winning strategy at a given \ruleset{Blackout position} is NP-hard. 
\end{theorem}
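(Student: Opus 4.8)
The plan is to reduce from \textsc{3SAT}, reusing the \cref{theo:MXOR} machinery and the \epmx-to-superstars reduction, by constructing a \ruleset{Blackout} position whose light vector $L$ plays the role of the running nim-value in the sum-of-superstars analysis. The dictionary is: each bulb is one bit of that value (a clause identity $z_t=2^t$), flicking a switch \emph{on} is the XOR of the current value by that switch's row, and flicking a switch \emph{off} is the ``move to $0$'' that merely discards a component. Under this dictionary the player AllOff, who wins at $L=\vec 0$, is the True-player $X$ who wins the superstar game by driving the nim-value to $0$, and OneOn, who wins if any bulb survives, is the (dummy) False-player $Y$.

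For the construction I would first switch on every clause bulb, reproducing the starting term $\ast(2^m-1)$. For the $i$th variable of $X$ I install a block of rows of $\alloffset$, one per state, each row being the clause-incidence pattern of that literal-state (the same nimbers used in the superstars, now read as bulb patterns), wired to an auxiliary \emph{selector} bulb that forces AllOff to activate the block only in a way matching a legal state-choice. Since $Y$ makes no meaningful decision in \cref{theo:MXOR}, I give OneOn only padding rows of $\oneonset$ whose sole function is to keep the turn parity balanced; once these are spent OneOn can merely pass while a bulb is lit, mirroring how the left-$0$/right-$0$ counts remain balanced at the start of each of Right's turns in the superstar sum.

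To prove correctness I would show honest play is forced using \cref{win strat}: the row counts are fixed so the associated superstar sum always carries the left-$0$/right-$0$ majority that \cref{win strat} converts into a win, so any OneOn deviation loses immediately, while by \cref{no-0 games} the only live quantity is the XOR recorded in $L$. Any attempt by OneOn to seize the final parity directly is defused as in the superstar argument, where the escape options $\ast 2^{m+j}$ reappear as high-index bulbs reachable only through $\alloffset$ and hence uncancellable by OneOn. With honest play pinned down, AllOff can extinguish every clause bulb on its closing moves if and only if every clause is XOR-satisfied, i.e.\ iff the \textsc{3SAT} instance is satisfiable; so AllOff wins iff the formula is satisfiable, establishing \cclass{NP}-hardness.

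The main obstacle I expect is the selector gadget together with the parity bookkeeping. Binary switches and a toggle-bulb can only pin AllOff to an \emph{odd}-cardinality activation of each block rather than to ``exactly one state,'' so I must check that no odd-but-larger activation ever helps AllOff -- that the patterns inherited from \cref{theo:MXOR} stay faithful under parity selection -- adding guard bulbs if spurious selections must be ruled out. At the same time I must make the OneOn row count and pass budget precise enough that the \cref{win strat} majority holds at every node of the intended line, not only at the root, and that OneOn truly has no clause bulb to exploit. Reconciling these two counting invariants is the real work; the remainder is the routine nim-sum/light-toggle translation already licensed by \cref{no-0 games} and the sum-of-superstars reduction.
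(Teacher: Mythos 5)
Your proposal diverges from the paper's proof (which does not go through \epmx{} at all) and, as written, has a genuine gap at exactly the point you flag: the selector gadget. Binary switches with XOR semantics can only impose \emph{parity} constraints, so your block-of-rows construction pins AllOff to an odd number of activated state-rows, never to ``exactly one state.'' Activating, say, three state-rows XORs three clause-incidence patterns into $L$ and corresponds to no variable assignment, and ruling out that some such spurious odd selection lets AllOff zero out $L$ on an unsatisfiable instance is precisely the hard content of the theorem --- your ``add guard bulbs if needed'' does not resolve it (a guard bulb shared by two rows is cancelled by activating both; a private guard per row forces \emph{all} rows on rather than one). The paper meets this same double-toggle phenomenon head-on and from a different direction: it reduces from \ruleset{Set Cover} with three-element sets and manufactures a promise problem, \ruleset{Pure Set Cover}, by enriching the collection with all proper subsets of each set, so that a size-$k$ cover exists if and only if a size-$k$ \emph{exact} cover exists. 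Exactness is what makes ``toggle every light exactly once'' interchangeable with ``cover every element,'' i.e.\ it is the paper's solution to the very overlap/parity problem your selector gadget leaves open. Your reduction needs an analogous closure or promise mechanism and does not have one.

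A second, independent flaw is your appeal to \cref{win strat} and \cref{no-0 games} to force honest play. Those lemmas are statements about \emph{disjunctive sums} of superstars, and a \ruleset{Blackout} position is not a disjunctive sum: every switch in $\alloffset$ and $\oneonset$ toggles the one shared light vector $L$, so the ``components'' interact through global state and superstar outcome classes simply do not transfer. (Relatedly, your $\ast 2^{m+j}$ escape options have no counterpart: in the superstar game Left deploys them when Right plays on the separate nimber term, but in \ruleset{Blackout} there is no such independent component for OneOn to attack.) The paper instead argues strategy directly and concretely: OneOn moves first; OneOn's $n-k-1$ redundant switches are all wired to the lights of $S_1$ so that AllOff can negate their cumulative effect with its own $S_1$ switch; and when no size-$k$ cover exists, OneOn saves its all-lights switch for the final move and chooses its setting to leave some uncovered light on. Note also that your OneOn ``padding rows'' cannot be inert --- the ruleset requires every row to contain a $1$ --- so any parity bookkeeping must come with a cancellation scheme like the paper's $S_1$ trick; reconciling that with your selector blocks is additional unfinished work, not routine translation.
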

\begin{proof}

We begin by defining two decision problems, \ruleset{Set Cover} and \ruleset{Exact Cover}. 

 In \ruleset{Set Cover}, there is a collection $V$ containing $n$ sets $S_1, S_2, \dots, S_n$ which each contain some subset of a ground set $E = \{e_1, \ldots, e_m\}$ of $m$ elements. There is also a given integer $k$, indicating a target number of sets to choose.

 A \ruleset{Set Cover} instance is feasible if there exists a selection of $k$ sets in $V$ such that every element in $E$ is in at least one of the selected $k$ sets.
 We call such selection a \emph{cover}.
A cover is \textit{exact} if for each element $e \in E$, $e$ appears exactly once in the selected sets.
\ruleset{Exact Cover} determines whether the input has an exact cover.

In our desired problem of \ruleset{Pure Set Cover}, we want to have a promise that if there is a \ruleset{Set Cover} of size $k$, then there is also a \ruleset{Exact Cover} of size $k$.\footnote{We thank Neal Young for this idea.}.

Next, we note that \ruleset{Set Cover} is NP-complete even if there are only three elements in each set in $V$ \cite{GareyJohnsonStockmeyer}.

Now we can reduce from \ruleset{Set Cover} with three elements in each set. For our reduction, we will enrich the input by adding new sets for each subset of those sets.  E.g., if $S_1 = \{e_1, e_2, e_3 \}$, then we include the six sets $\{e_1\}$, $\{e_2\}$, $\{e_3\}$, $\{e_1, e_2\}$, $\{e_1, e_3\}$, and $\{e_2, e_3\}$ in our new collection $V$ as well.  This enforces the promise, because if there is a set cover of size $k$, for each overlap, for one of the overlapping sets, one can instead choose to select a subset that doesn't overlap.  We can repeat this for all overlapping sets without changing $k$.

In other words, with this enrichment, we have proved that \ruleset{Pure Set Cover} is also \cclass{NP}-complete.


To reduce from \ruleset{Pure Set Cover} to \ruleset{Blackout}, we create a light switch for the ALLOff player for each set in the \ruleset{Pure Set Cover} and $n - k$ light switches for the OneOn-player. We create a light for each element in $E$. 

The switches are set as the following: 
\begin{itemize}
    \item {\bf (AllOff)}: For the $n$ light switches controlled by the AllOff player, we connect them to the lights corresponding to the elements contained in the sets, one for each set.
    \item {\bf (OneOn)}: We connect $n-k-1$ of the light switches controlled by the OneOn-player only to the lights corresponding to the elements in  $S_1$, which the AllOff player also has a switch for.  Finally, the OneOn-player's last light switch is connected to all of the lights.
\end{itemize}

All lights begin in the on state and OneOn goes first.

We claim that the AllOff-player has a winning strategy if and only if there exists a Set Cover in the \textsc{Pure Set Cover}.

The OneOn player has $n - k - 1$ switches for $S_1$, $n - k - 2$ of which are redundant, so they should start by playing all of these, without the setting changing the outcome of the game.  

If there is a working $k$-sized cover, AllOff will spend their first $n-k-2$ turns choosing to turn off switches that are not in their pure cover and not $S_1$.  OneOn, seeing that AllOff has a switch to negate theirs for $S_1$, should play that again.  If $S_1$ is part of the cover, then AllOff can choose the setting that shuts them off.  Otherwise, they should choose to turn them on.  OneOn will then choose to leave all lights on (otherwise AllOff can win immediately).  Now AllOff has $k$ turns to flip all switches in their cover of size $k$ to turn all lights off.  If OneOn decides to activate the switch connected to all lights earlier, they should choose to leave them all on, in which case AllOff just has extra turns to put their cover to work.  No matter what, AllOff will win.


Then, if there exists no Exact Cover of size $k$, then there exists no set cover of size $k$.  OneOn can win by saving their all-lights-switch for their last move.  Since there is no set cover of size $k$, there must be at least one light not covered by the AllOff player's remaining switches. The OneOn player can either flip or not flip the final switch to make sure that light is on and win the game.
\end{proof}


\section{Conclusion}

Going beyond winnability,  Sprague-Grundy Theory \cite{Sprague:1936,Grundy:1939}
introduced the first notion of \emph{game values} and \emph{game algebra} for combinatorial games, which was later expanded to partizan games \cite{WinningWays:2001,ONAG:2000}.
It demonstrated that every impartial game can be mathematically reduced to a single-pile of \ruleset{Nim} in the algebra of disjunctive sums.
Because the nim sum can be computed in linear-time in the size of the binary representations of the summands \cite{Bouton:1901}, Sprague-Grundy Theory further captures the computational benefit of knowing game values of impartial games, rather than just their game rules or winnability \cite{BurkeFerlandTengGrundy}. 

By showing that the winnability of the sum of superstars and comets are intractable, our result highlights the fundamental subtlety of tepid partizan game values. 
It takes a significant step beyond Morris' \cite{morris1981playing} classical intractability result by demonstrating that intractability happens just one step above nimbers (stars).
This hardness result implies that the Bouton-like result for \ruleset{Nim} \cite{Bouton:1901}  is unlikely for superstars and comets.

Part of our proof has also inspired the design of a board game, \ruleset{Blackout}, which enjoys intriguing complexity
  based on the shape of game boards:
  On the one hand, if both players have the same number of switches, which we refer to as the balanced case, then the game can be solved in polynomial time \footnote{This statement follows from the basic idea in \cite{SchaeferSATDichotomy}.}.
On the other hand, if players have a different number of switches, then the game is intractable in general.

Whereas the sum of superstars has been shown to be \cclass{NP}-hard to solve, its precise complexity remains open. 
It can be shown that the outcome class of a sum of superstars can be computed in polynomial space in the number of bits representing the sum.\footnote{One of the proofs was given by Aaron Siegel.}
Therefore, as part of the next step in our future research, we would like to settle the following conjecture.

\begin{conjecture}
A sum of superstars, and hence \ruleset{Paint Can}, is \cclass{PSPACE}-complete. 
\end{conjecture}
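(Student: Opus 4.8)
The plan is to prove the two directions of \cclass{PSPACE}-completeness separately: membership in \cclass{PSPACE}, and \cclass{PSPACE}-hardness, with the latter being the real work.

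For membership, the naive recursive minimax over the game tree fails directly, because a sum of superstars can have \emph{exponentially long} play: a single component may offer a move to a nimber $\ast v$ whose index $v$ is exponential in the bit-length of the input, and counting such a pile down can take exponentially many moves, so a depth-first evaluation would use exponential space. The plan is instead to show that the outcome is governed by a polynomial amount of succinctly representable information, following the unpublished argument attributed to Aaron Siegel. Concretely, I would argue that the only genuinely branching decisions are the polynomially many \emph{commitments}---the moves that replace a superstar by one of its nimber options---while the interleaved nimber ``count-down'' play is governed entirely by the nim-sum of the committed piles (by \cref{no-0 games} the no-0 components contribute $0$), together with the class counts of \cref{win strat} and a bounded tempo/atomic-weight quantity. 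Since each of these invariants is maintained in polynomially many bits and updated in polynomial time, one obtains an alternating search using polynomial space, whose recursion depth is polynomial in the number of components rather than in the nimber magnitudes.

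For hardness, I would strengthen the existing chain \textsc{3SAT} $\to$ \epmx $\to$ sum of superstars into a reduction from a genuinely alternating \cclass{PSPACE}-complete problem such as \textsc{TQBF}. The reduction of \cref{theo:MXOR} yields only \cclass{NP}-hardness precisely because ``$Y$ will not make meaningful decisions''; the task is therefore to make both players' choices matter, with the alternating one-variable-per-turn structure of \epmx simulating a quantifier prefix $\exists x_1\,\forall x_2\,\exists x_3\cdots$. A crucial warning sign must be addressed first: over a Boolean domain, \emph{quantified} affine (XOR) constraint satisfaction is in \cclass{P}, so no reduction using ordinary Boolean XOR clauses can possibly work. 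The \cclass{PSPACE}-hardness must come entirely from the \emph{multistate} feature, since a variable that selects one of $k$ states---each toggling a different subset of clause-identity bits---is far more expressive than a Boolean XOR variable. Accordingly, the heart of the reduction would be to show that multistate quantified XOR-SAT, in the equal-partitioned alternating form of \epmx, is \cclass{PSPACE}-complete, by encoding the truth value of each quantified \textsc{TQBF} variable together with the requisite clause-satisfaction bookkeeping into the state chosen for a corresponding multistate variable, and translating each clause of the matrix into an XOR-detection gadget over the clause-identity bits.

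I expect this hardness direction to be the main obstacle, for two reasons. First, one must \emph{force faithful play}: against an adversary free to assign variables in any order, waste moves, or exploit the dummy variables introduced to meet the equal-partition requirement, the reduction must guarantee that play follows the intended quantifier order and that the padding confers no advantage---here I would reuse and extend the tempo-control trick already present in the reduction, namely the large options $\ast 2^{m+j}$ that let one player neutralize any out-of-turn deviation. Second, and more fundamentally, one must prove that the specific constraint language ``XOR of state-indicator literals'' is expressive enough to embed a \cclass{PSPACE}-complete alternating problem at all; establishing this expressiveness, rather than merely assuming it, is the crux, and is exactly what keeps the statement a conjecture for now.
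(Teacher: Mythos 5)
There is a genuine gap here, and it is structural: this statement is a \emph{conjecture} in the paper --- the authors prove only \cclass{NP}-hardness and explicitly leave \cclass{PSPACE}-completeness open --- and your proposal does not close it either, because both of its directions defer the actual mathematics. For membership, you correctly identify that naive minimax fails (nimber indices can be exponential in the bit-length, so play can be exponentially long), but your fix ``the only genuinely branching decisions are the polynomially many commitments, plus nim-sum and class-count invariants'' is asserted, not proved; you would at minimum need to show that the outcome of a position (multiset of superstars) $+ \ast g$ depends only on polynomially-bounded data and that the recursion terminates in polynomial depth despite moves that merely decrease $g$. The paper itself only cites an unpublished argument of Aaron Siegel in a footnote for exactly this claim, so invoking it does not constitute a proof. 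For hardness, you candidly admit that proving the expressiveness of quantified multistate XOR clauses ``is the crux, and is exactly what keeps the statement a conjecture for now'' --- which is an accurate self-assessment, but it means the proposal is a research plan, not a proof attempt that succeeds or fails on its merits.

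Worse, the paper's conclusion identifies a concrete obstacle that your plan runs into head-on. Your idea is to make $Y$'s moves meaningful so that \epmx simulates a quantifier prefix, using the large $\ast 2^{m+j}$ options as ``tempo control.'' But those options serve a much narrower purpose in the existing reduction: they punish Right for deviating onto the nimber term, and nothing else. They do not constrain which variable $Y$ assigns, when, or to what state --- and the authors report that once $Y$ becomes a real decision maker, the game ``shifts dramatically in $Y$'s favor,'' with ``no clear way to punish player $Y$ for covering a variable multiple times''; they found it hard even to exhibit positions where $X$ wins against a nontrivial $Y$. Your own warning sign points the same way: quantified Boolean XOR-SAT is in \cclass{P}, so all the hardness must be wrung out of the multistate structure, and it is precisely unknown whether that structure survives adversarial scheduling by $Y$. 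So while your plan is thoughtful and its diagnosis of the difficulties matches the paper's, the two steps you would need --- a self-contained \cclass{PSPACE} membership argument and a faithfulness-forcing gadget against an unconstrained $Y$ --- are both missing, and the second is exactly where the paper's authors suggest ``either a different approach is needed, or a very clever reduction to \epmx is required.''
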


It is worth noting that it seems challenging to naturally extend the current setup of our intractability proof for this conjecture. Reducing directly from QSAT to EPMX seems fruitless since there is no clear way to ``punish'' player Y from covering a variable multiple times.
For any reduction, the fundamental difficulty lies in the strong asymmetry between players X and Y in that Y is just too powerful with options.
In other words, 
as soon as EPMX allows Y to be a ``real'' decision maker---as opposed to the construction in Theorem \ref{theo:MXOR}---the game shifts dramatically in Y's favor, and indeed, it is difficult to even find complicated positions where X wins with a non-trival Y player. As it stands, either a different approach is needed, or a very clever reduction to EPMX is required.


\bibliographystyle{plainurl}


\end{document}